\newcommand{\eps}{\epsilon}
\newtheorem{defin}[theorem]{Definition}
\newtheorem{fact}[theorem]{Fact}
\def\imod#1{\allowbreak\mkern10mu({\operator@font mod}\,\,#1)}
\renewcommand{\P}{\mathop{\mathrm{P}}}
\begin{document}
\pagestyle{headings}
\title{Approximate Counting of Matchings in Sparse  Hypergraphs}
\author{
  Marek Karpi\'{n}ski\thanks{Research supported partly by DFG grants
      and the Hausdorff Center grant EXC59-1.}\inst{1} and   Andrzej Ruci\'{n}ski\thanks{Research supported by  the Polish NSC grant N201 604 940.}\inst{2} and Edyta Szyma\'{n}ska\thanks{Research supported by the Polish NSC grant N206 565
740}.\inst{2}}

\institute{Department of Computer Science, University of Bonn.
    Email: marek@cs.uni-bonn.de\and
Faculty of Mathematics and Computer Science,
    Adam Mickiewicz University, Pozna\'{n}. Email: rucinski,edka@amu.edu.pl}

\date{}
\maketitle

\begin{abstract}
In this paper we give a fully polynomial randomized approximation scheme (FPRAS) for the number of
all matchings in  hypergraphs belonging to a class of sparse, uniform hypergraphs. Our method is
based on a generalization of the canonical path method to the case of uniform hypergraphs.

\end{abstract}


\section{Introduction}\label{intro}

 \emph{A hypergraph} $H=(V,E)$ is a finite set of vertices $V$ together with a family $E$ of
distinct, nonempty subsets of vertices called edges. In this paper we consider $k$-\emph{uniform
hypergraphs} (called also \emph{$k$-graphs}) in which, for a fixed $k\ge2$,  each edge is of size $k$.   \emph{A matching} in a hypergraph
is a set of disjoint edges.
We will often identify a matching $M$ with the hypergraph $H[M]$ induced by $M$ in $H$.
\emph{The intersection graph}  of a hypergraph $H$ is a graph $L:=L(H)$ with vertex set $V(L)=E(H)$ and the edges set $E(L)$ consisting of all intersecting pairs of edges of $H$. When $H$ is a graph, the intersection graph $L(H)$ is called \emph{the line graph} of $H$.

In a seminal paper \cite{js}, Jerrum and Sinclair constructed an FPRAS (see Section \ref{1.1} for
the definition) for counting the number of all matchings in graphs (a.k.a. the monomer-dimer
problem) based on an ingenious technique of canonical paths.
The method was extended later in \cite{JSV} to solve the permanent problem.
Here we  test for what classes of
hypergraphs their method can be carried over. It turns out that, given a $k$-graph $H$, we can
adopt the proof for graphs only when for  every two matchings $M,M'$ in $H$ the intersection graph
$L=L(M\cup M')$ between $M$ and $M'$ has maximum degree $\Delta(L)\leq 2.$ This happens if and only
if $H$ contains no \emph{3-comb}, i.e. a $k$-graph $H_0$ consisting of  a matching
$\{e_1,e_2,e_3\}$  and one extra edge $e_4$  such that $|e_4\cap e_i|\ge1$ for $i=1,2,3$ (see Fig.\ref{3comb}). Let us denote by ${\cal H}_0$ the family of all $k$-graphs which do not contain $H_0$. Our main
result is the following hypergraph generalization of the Jerrum-Sinclair theorem. In fact, they
considered the weighted case, while we, for clarity, assume that the hypergraphs are unweighted.
However, the weighted case can be handled in a similar manner.
%
%
\begin{theorem}\label{main}
There exists an FPRAS for  the problem of counting all matchings in $k$-graphs $H\in{\cal H}_0$.
\end{theorem}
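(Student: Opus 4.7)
The plan is to follow the blueprint of Jerrum and Sinclair \cite{js}: first reduce approximate counting of matchings to approximate uniform sampling by the standard self-reducibility argument (which works here because deleting a vertex or an edge preserves membership in ${\cal H}_0$), and then obtain a sampler by simulating an ergodic Markov chain $\cM_H$ on the set of all matchings of $H$ whose stationary distribution is (a mild, computable reweighting of) the uniform distribution, and whose mixing time is polynomial in $|V(H)|$ and $\log(1/\eps)$.

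The chain would use the natural generalization of the Jerrum--Sinclair transitions: from a matching $M$, pick an edge $e\in E(H)$ uniformly at random and (i) if $e\in M$, delete it; (ii) if $e\notin M$ and $e$ is disjoint from every edge of $M$, add it; (iii) if $e\notin M$ and $e$ intersects exactly one edge $e'\in M$, perform the swap $M\mapsto (M\setminus\{e'\})\cup\{e\}$; otherwise stay put. Reversibility, ergodicity, and the form of the stationary distribution are then routine verifications, essentially identical to the graph case.

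The heart of the proof is bounding the mixing time by the canonical-path method. For each ordered pair of matchings $(I,F)$ we construct a canonical path in the state-space graph of $\cM_H$ using the structure of $L:=L(I\cup F)$. The hypothesis $H\in{\cal H}_0$ gives $\Delta(L)\le 2$, so $L$ is a vertex-disjoint union of paths and even alternating cycles, in which $I$- and $F$-edges alternate. We would fix a canonical ordering of these components (using an arbitrary ordering of $E(H)$) and, within each, canonically \emph{unwind} it one edge at a time by a prescribed sequence of delete/add/swap moves that transforms $I$ into $F$. To bound congestion along a transition $(M,M')$, we encode each routed pair by $\eta(I,F,M,M') := (I\triangle F)\triangle (M\triangle M')$, adjusted by at most one ``free'' edge; one verifies that $\eta$ is itself a matching in $H$ and that $(I,F)$ can be uniquely recovered from $(M,M',\eta)$. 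This yields the standard polynomial bound on congestion and hence on mixing time.

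The main obstacle, and the very reason for the restriction to ${\cal H}_0$, lies in this encoding/decoding step. If $L(I\cup F)$ contained a vertex of degree $\ge 3$, then at the unwinding front the next move would not be forced and the map $(I,F,M,M')\mapsto\eta$ would fail to be injective. Thus the bulk of the technical work is to check that (a) the three move types (i)--(iii) interact correctly with the component structure of $L(I\cup F)$, (b) the unwinding of a path or cycle component produces valid matchings of $H$ at every intermediate step (for which the no-3-comb condition is again used to rule out unintended $k$-wise collisions), and (c) the recovered $\eta$ always lies in the matching polytope of $H$. Once these points are verified, the Jerrum--Sinclair conductance-style argument carries through verbatim and, combined with self-reducibility, yields the desired FPRAS.
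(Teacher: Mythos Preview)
Your plan matches the paper's approach almost exactly: the same Markov chain, the same reduction to sampling via self-reducibility, and the same canonical-path/encoding strategy exploiting that $H\in\mathcal{H}_0$ forces $\Delta(L(I\oplus F))\le 2$, so that the components of $I\oplus F$ are alternating paths and even cycles which can be unwound one at a time.

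There is, however, one concrete error in the encoding step. Your formula $\eta=(I\triangle F)\triangle(M\triangle M')$ equals $(I\oplus F)\setminus(M\oplus M')$ (since $M\oplus M'\subseteq I\oplus F$), which is almost all of the alternating symmetric difference and is nowhere near a matching; no single ``free edge'' adjustment repairs it, so your claim that ``$\eta$ is itself a matching in $H$'' fails and the congestion bound collapses. The correct encoding, as in the paper and in \cite{js}, is
\[
\eta_{M,M'}(I,F)=(I\oplus F)\oplus(M\cup M'),
\]
which equals $(I\cap F)\cup\big((I\oplus F)\setminus(M\cup M')\big)$. This lands not in $\Omega(H)$ but in the slightly larger set $\Omega'(H)=\{H'\subseteq H:\exists\,e\in H'\text{ with }H'-e\in\Omega(H)\}$, of size at most $|E(H)|\cdot|\Omega(H)|\le n^{k}|\Omega(H)|$; injectivity of $(I,F)\mapsto\eta$ for fixed $(M,M')$ then follows along the lines you sketch, and the polynomial congestion bound goes through. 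Your phrase ``adjusted by at most one free edge'' is exactly the right intuition for this corrected $\eta$, but not for the formula you wrote. (A minor aside: for the unweighted problem the chain you describe is symmetric, so its stationary distribution is exactly uniform---no reweighting is needed.)
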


 Theorem \ref{main} is proved in Section \ref{markov}. In Section \ref{examples} we give several examples of  classes of $k$-graphs which belong
to ${\cal H}_0$. Finally, in Section \ref{fur} we discuss obstacles preventing us from extending our result to all hypergraphs.

 We can characterize family ${\cal H}_0$ in terms of the intersection graph
$L(H)$. Namely, a $k$-graph $H\in{\cal H}_0$ if and only if the intersection graph $L(H)$ of $H$ is
\emph{claw-free}, that is $L(H)$ does not contain an induced subgraph isomorphic to the star $K_{1,3}$.
For $k=2$ every $k$-graph, i.e., every graph is in ${\cal H}_0$. For $k\ge3$, the requirement that $H\in{\cal H}_0$ is quite restrictive and causes the hypergraph to be rather
sparse (of size $O(n^{k-1})$). Nevertheless, as can be seen in the next subsection, the problem of counting matchings in $k$-graphs belonging to ${\cal H}_0$ is still quite hard.

\subsection{$\sharp$P-Hardness}

We prove in this section that the problem of counting matchings in $k$-graphs belonging to the family ${\cal H}_0$ is $\sharp$P-complete. Let $\sharp$MATCH$_0^k$ denote this problem.

\begin{lemma}\label{hardness}
The problem $\sharp$MATCH$_0^k$ is $\sharp$P-complete for every $k\ge3$.
\end{lemma}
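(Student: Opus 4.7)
The plan is to reduce from the problem of counting matchings in an ordinary graph, which is known to be $\sharp$P-complete (Valiant; see \cite{js}). Given an arbitrary input graph $G=(V,E)$, I would build a $k$-graph $H$ by \emph{padding} each edge with private vertices. Specifically, for every $e=\{u,v\}\in E$, introduce $k-2$ fresh vertices $w_e^{1},\dots,w_e^{k-2}$ used by $e$ alone, and let $\tilde e=\{u,v,w_e^{1},\dots,w_e^{k-2}\}$. Set
$$V(H)=V\cup\bigcup_{e\in E}\{w_e^{1},\dots,w_e^{k-2}\},\qquad E(H)=\{\tilde e:e\in E\}.$$
This construction is clearly polynomial in the size of $G$, and membership of $\sharp\text{MATCH}_0^k$ in $\sharp$P is routine since one can verify in polynomial time whether a given set of $k$-edges is a matching.

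The first step of the proof is to verify that $H\in\mathcal H_0$. Because the padding vertices $w_e^{i}$ belong only to $\tilde e$, two distinct hyperedges $\tilde e,\tilde f$ intersect if and only if $e$ and $f$ share an endpoint in $G$. Therefore the intersection graph $L(H)$ is isomorphic to the line graph $L(G)$. Using the characterization recalled earlier (namely, $H\in\mathcal H_0$ iff $L(H)$ is claw-free), it suffices to observe that line graphs of simple graphs are claw-free: any edge $e$ of $G$ has only two endpoints, so among any three edges each sharing an endpoint with $e$, two must share the same endpoint of $e$ by the pigeonhole principle, hence are adjacent in $L(G)$; thus $L(G)$ has no induced $K_{1,3}$.

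The second step is to exhibit a bijection between matchings. Since distinct padding vertices never cause additional intersections, a family $\{\tilde e_1,\dots,\tilde e_m\}\subseteq E(H)$ consists of pairwise disjoint hyperedges if and only if $\{e_1,\dots,e_m\}$ is a matching in $G$. Consequently the number of matchings in $H$ equals the number of matchings in $G$, and an oracle for $\sharp\text{MATCH}_0^k$ would solve the $\sharp$P-complete graph matching problem. This yields the claimed hardness for every fixed $k\ge 3$.

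There is essentially no hard step here: the construction is standard edge-padding, and the two required properties (claw-freeness of the intersection graph and the matching-preserving bijection) are both immediate consequences of the fact that the added vertices are private to individual hyperedges. The only mild point worth stating explicitly is the claw-freeness of line graphs, which underlies the whole reduction and explains why padding does not take us outside $\mathcal H_0$.
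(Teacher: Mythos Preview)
Your proof is correct and follows essentially the same approach as the paper: both pad each graph edge with $k-2$ private vertices and use the resulting matching-preserving bijection. The only cosmetic differences are that the paper reduces from counting matchings in bipartite graphs of maximum degree four (citing Vadhan) rather than from arbitrary graphs, and it verifies membership in $\mathcal H_0$ by directly noting the absence of a 3-comb rather than via the claw-freeness of $L(H)\cong L(G)$.
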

\begin{proof} We use a reduction from the problem of counting all matchings in bipartite graphs $G=(V,E)$ of  maximum degree at most four, which, by a result of Vadhan \cite{vadhan}   is $\sharp$P-complete. For a given bipartite graph $G=(V,E)$ of maximum degree at most four with a bipartition $V=V_1\cup V_2$ we construct a $k$-graph $H=(V',E')$ from the family ${\cal H}_0$ as follows. For every edge $e\in E$ we add to $V$ additional $k-2$ vertices, so $V'=V\cup \bigcup_{e\in E}\{v^e_1,v^e_2,\ldots,v^e_{k-2}\}.$ Now, every edge $e=(u,v)\in E$ is replaced by the corresponding $k$-tuple $(v,v^e_1,v^e_2,\ldots,v^e_{k-2},u).$ Thus $|V'|=|V|+(k-2)|E|$, $|E|=|E'|$ and the resulting $k$-graph $H'=(V',E')$ is simple, $k$-partite, has maximum vertex degree at most four and, more importantly, does not contain any 3-comb. Moreover, there is a natural one-to-one correspondence between the matchings in $G$ and the matchings in $H.$
\qed\end{proof}

\subsection{Motivation from statistical physics}

In 1972 Heilmann and Lieb \cite{HL} studied monomer-dimer systems, which in the graph theoretic
language correspond to (weighted) matchings in  graphs. In physical applications these graphs are
typically some (infinite) regular lattices. Dimers represent diatomic molecules which occupy disjoint pairs of adjacent vertices of the lattice and monomers are the remaining vertices.
Heilmann and Lieb proved that the associated Gibbs measure is unique (in other words, there is no phase transition). They did it by proving that the roots of the generating matching polynomial of any  graph are all real, equivalently that the roots of the hard core partition function (independence polynomial) of any line graph  are all real. The latter result was later extended to all claw-free graphs by Chudnovsky and Seymour \cite{CS}. The uniqueness of Gibbs measure on $d$-dimensional latticed was reproved in a slightly stronger form and by a completely different method by Van der Berg \cite{VdB}.

 Hypergraphs may be at hand when instead of diatomic molecules
 bigger molecules (polymers) are considered which, again, can
 occupy ``adjacent'', disjoint sets of vertices of a lattice.
 As long as the hypergraph lattice $H$ belongs to the family ${\cal H}_0$,
 the intersection graph $L(H)$ is claw-free (because $H$ contains no 3-comb)
 and, by the result of \cite{CS} combined with the proof from \cite{HL} there is no phase transition either.
 However, it is possible to have a phase transition
 for a monomer-trimer system (cf. \cite{Heil}). Interestingly, the example given
 by Heilmann (the decorated, or subdivided, square lattice with hyperedges
 corresponding to the collinear triples with midpoints at the branching points of the original square lattice)
 is a 3-uniform hypergraph containing  3-combs, and thus its intersection graph is \emph{not} claw-free (see Fig.\ref{siatka}).
\begin{figure}
\centering
\subfigure[a 4-uniform 3-comb]{\scalebox{0.7}{\label{3comb}\input{3comb.pstex_t}}}
\qquad\qquad\qquad
\subfigure[Heilmann's 3-graph lattice; the shaded edges form a 3-comb]{\label{siatka}\includegraphics{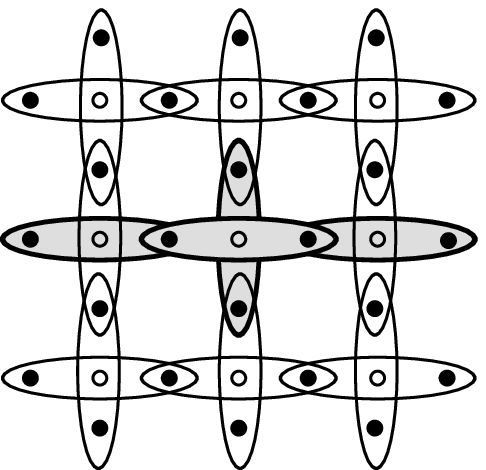}}
\caption{}
\label{fig1}
\end{figure}
%

%
%

\subsection{Related  results}

Recently, an alternative  approach to constructing counting
schemes for graphs has been developed based on the concept of
spatial correlation decay. This resulted in deterministic fully
polynomial time approximation schemes (FPTAS) for counting
independent sets in graphs with degree at most five (\cite{wei}),
proper colorings (\cite{GK}), and matchings in graphs of bounded
degree (\cite{bgknt}). It is not clear to what extent these methods can
be applied to hypergraphs.

The above mentioned FPTAS for counting independent sets in graphs implies an FPTAS for counting
matchings in hypergraphs whose intersection graphs have maximum degree at most five. This is the
case of the hexagon based lattice (see Subsection \ref{kag}) and, with high probability,
of the random hypergraphs discussed in Subsections \ref{rh} and \ref{tr}. More importantly, this is the case also of the
the Heilmann lattice described in the previous
subsection (the maximum degree of its intersection graph is three), which undermines our temptation to link the absence of phase transition for a hypergraph lattice with the absence of a 3-comb, that is with the claw-freeness of the intersection graph of the lattice.

As far as hypergraphs are concerned, the authors of \cite{BDK}
showed that, under certain conditions, the Glauber dynamics for
independent sets in a hypergraph, as well as the Glauber dynamics
for proper q-colorings of a hypergraph mix rapidly. It is
doubtful, however, if the path coupling technique applied there
can be of any use for the problem of counting matchings in
hypergraphs. Nevertheless, paper \cite{BDK} marks a new line of
research, as there have been only few results (\cite{bdgj},
\cite{bd}) on approximate counting in hypergraphs before. The only
other paper devoted to counting matchings in hypergraphs we are
aware of is \cite{barvinok}, where Barvinok and Samorodnitsky
compute the partition function for matchings in hypergraphs under
some restrictions on the weights of edges. In particular they are
able to distinguish in polynomial time between hypergraphs that
have sufficiently many perfect matchings and hypergraphs that do
not have nearly perfect matchings.

\subsection{Approximate Counting and Unform Sampling}\label{1.1}

Given $\eps>0$ and $\delta>0$, we say that a random variable $Y$ is an $(\eps, \delta)$-\emph{approximation} of a constant $C$ if
\[\P\left(|Y-C|\ge\eps C\right)\le\delta.\]
Consider a problem of evaluating  a function $f$ over a set of input strings $\Sigma^*$.

\begin{defin}\label{fpras}\rm
A randomized algorithm  is called a \emph{fully polynomial randomized approximation scheme (FPRAS)
for $f$} if for every triple $(\eps,\delta, x)$ with $\eps>0,\;\delta>0$, and $x\in \Sigma^*,$ the
algorithm returns an  $(\eps, \delta)$-\emph{approximation} $Y$ of $f(x)$ and runs in time
polynomial in  $1/\eps$, $\log(1/\delta)$, and $|x|$.
\end{defin}

Consider a counting problem, that is,  a problem of computing $f(x)=|\Omega(x)|$, where $\Omega(x)$
is a well defined finite set associated with $x$ (think of the set of all matchings in a
hypergraph). As it turns out (see below), to construct an FPRAS for such a problem it is sufficient
to be able to efficiently sample an element of $\Omega(x)$ almost uniformly at random. To make it
precise, given $\eps>0$, we say that a probability distribution $\P:2^\Omega\to[0,1]$ over a finite
sample space $\Omega$ is $\eps$-uniform if for every $S\subseteq \Omega$
$$\left|\P(S)-\frac{|S|}{|\Omega|}\right|\le \eps,$$
that is, if the total variation distance, $d_{TV}(\P,\tfrac1{|\Omega|})$, between the two
distributions is bounded by $\epsilon$.

\begin{defin}\label{fpaus}\rm
A randomized algorithm  is called a \emph{fully polynomial almost uniform sampler (FPAUS)}
 for a counting problem $|\Omega(x)|$ if for every
pair $(\eps, x)$ with $\eps>0$ and $x\in \Sigma^*,$ the algorithm samples $\omega\in\Omega$
according to an $\eps$-uniform distribution $\P$ and runs in time polynomial in  $1/\eps$ and
$|x|$.
\end{defin}

It has been proved by Jerrum, Valiant, and Vazirani \cite{JVV} that for a broad class of  counting
problems, called self-reducible, including the matching problem, knowing an FPAUS allows one to
construct an FPRAS. For a  proof in the graph case see Proposition 3.4 in \cite{jerrum-book}. The hypergraph
case follows mutatis mutandis. Thus, the proof of Theorem \ref{main} reduces to constructing an
FPAUS for matchings in $H$.

In fact, this approach has been invented for matchings in graphs  already by Broder in
\cite{broder}, and successfully executed by Jerrum and Sinclair in \cite{js}. In their version the
main steps of finding an efficient  FPAUS for  matchings in a graph $H$ were

\begin{itemize}
\item[$\bullet$] a construction of an ergodic time-reversible, symmetric Markov chain $\mathcal{MC}(H)$ whose
 state space $\Omega$ consists of
all matchings in $H$;
 \item[$\bullet$] a proof that $\mathcal{MC}(H)$ is rapidly mixing.

\end{itemize}

\subsection{Rapid mixing}
Given an arbitrary probability distribution $\P_0$ on the state space $\Omega$, let us define the
mixing time $t_{mix}(\epsilon)$ of a Markov chain $\mathcal{MC}$ as
$$t_{mix}(\epsilon)= \min\{t: d_{TV}({\P}_t,\tfrac1{|\Omega|})\le \epsilon\},$$
where $\P_t$ is the chain's state distribution after $t$ steps, beginning from the initial
distribution $\P_0$.  Recall that if an ergodic time-reversible Markov chain is symmetric, i.e.,
the transition probabilities satisfy $p_{ij}=p_{ji}$ for all $i,j\in \Omega$, then its unique
stationary distribution is uniform (cf. \cite{jerrum-book}). In that case we define the transition graph
$G_{\mathcal{MC}}=G$ of $\mathcal{MC}$ as a  graph on the vertex set $V(G)=\Omega$ and the edge set
$E(G)=\{\{i,j\}: p_{ij}>0\}$.  Note that $G$ is undirected but, possibly, with loops. The pivotal
role in estimating the rate of convergence of $\mathcal{MC}$ to its uniform stationary distribution
is played by an expansion parameter, called \emph{ the conductance } and denoted
 $\Phi(\mathcal{MC})$ which in the symmetric case is  defined by a simplified formula
\begin{equation}\label{cond}
\Phi:=\Phi(\mathcal{MC})= min_{S\subseteq \Omega:\,0<|S|\leq \tfrac 12|\Omega|}
  \frac{\sum_{ij\in G}p_{ij}}{|S|}.
  \end{equation}
Indeed,
it follows from Theorem 2.2 in \cite{js} that if $p_{ii}\ge\tfrac12$ for all $i\in \Omega$ then
\begin{equation}\label{lam}
d_{TV}({\P}_t,\tfrac1{|\Omega|})\le|\Omega|^2\left(1-\Phi^2/2\right)^t,
\end{equation}
regardless of the initial distribution ${\P}_0$, and consequently,
\begin{equation}\label{mix}
t_{mix}(\epsilon)\le\frac2{\Phi^2}\left(2\log|\Omega|+\log\epsilon^{-1}\right)
\end{equation}
 Hence, it becomes crucial to estimate the
conductance from below by the reciprocal of a polynomial in the input size. To this end, observe that

\begin{equation}\label{p}
\Phi(\mathcal{MC})\ge min_{S\subseteq \Omega:\,0<|S|\leq \tfrac 12|\Omega|}
  \frac{p_{\min}|cut(S)|}{|S|},
  \end{equation}
  where $cut(S)$ is the edge-cut of  $G$ defined by $S$, and
  $$p_{\min}=\min\{p_{ij}:\; \{i,j\}\in G,\; i\neq
  j\}.$$
 For Markov chains on matchings of an $n$-vertex $k$-graph $H$, denoted further by $\mathcal{MC}(H)$, to bound $|cut(S)|$,  Jerrum and Sinclair introduced their  method of
 canonical paths which boils down to:

\begin{itemize}
\item[$\bullet$] defining a \emph{canonical path} in $G$ for every pair of matchings $(I,F)$ in $H$;
\item[$\bullet$] bounding  from above the number of canonical paths containing a prescribed
 transition (an edge of $G$) by $poly(n)|\Omega|$.
\end{itemize}

Since every canonical path between a matching in $S$ and a matching in the complement of $S$ must go through an edge of $cut(S)$, we have, clearly,
\begin{equation}\label{cut}
|cut(S)|\ge\frac{|S|(|\Omega|-|S|)}{poly(n)|\Omega|}\ge\frac{|S|}{2poly(n)}
 \end{equation}
 and, consequently,
\begin{equation}\label{Phi}\Phi(\mathcal{MC}(H))\ge\frac{p_{\min}}{poly(n)}.\end{equation}

Our plan for proving Theorem \ref{main} is to basically  follow the footprints of \cite{js} for as
long as it is feasible for the more complex structure of hypergraphs. Our proof is given in the
next section, while Section \ref {examples} contains examples of $k$-uniform hypergraphs to which
Theorem \ref{main} applies. But first we collect together various definitions of cycles in
hypergraphs, a couple of which  will be used later in the paper.

\subsection{Cycles in hypergraphs}\label{hypcyc}

A hypergraph with edges $\{e_1,\dots,e_m\}$, where $m\ge3$, is \emph{a Berge-cycle} if there is \emph{a subset}
$\{v_1,\dots,v_{m}\}$ of its vertex set (called \emph{the core}) such that for every $i=1,\dots,m$,
the vertices $v_i$ and $v_{i+1}$ belong to $e_i$ where $v_{m+1}:=v_1$.
(See Fig.\ref{Bcyc} for an example of a Berge-cycle).
The structure of a Berge-cycle, in general, may be quite chaotic, not resembling what we think a
cycle  should look like.
\begin{figure}
\centering
\subfigure[a 4-uniform Berge cycle]{\scalebox{0.5}{\label{Bcyc}\input{Bergecyc.pstex_t}}}
\qquad\qquad
\subfigure[a necklace]{\scalebox{0.5}{\label{neck}\input{neclace.pstex_t}}}
\caption[]{}
\label{cycles}
\end{figure}

\emph{A necklace} is a Berge-cycle whose \emph{all} vertices can be ordered  cyclically  in such a way
that every edge forms a segment of this ordering -- see Fig.\ref{neck}. (Then the first vertices of each edge form the
core.) A necklace $C$ is called \emph{loose} if $\Delta(C)\le 2$.  \emph{An open loose necklace} is a hypergraph
with edge set $\{e_1,\dots, e_m\}$, where $m\ge1$, such that for every $1\le i<j\le m$, $e_i\cap e_j\neq\emptyset$ if and only if $j=i+1$.

Moving to $k$-uniform hypergraphs, for  a pair of natural numbers $k\ge 3$ and $1\le \ell\le k-1$,
define an {\em $\ell$-overlapping $k$-cycle} $C_n^{(k)}(\ell)$ as a necklace of $n$ vertices in which every two
consecutive edges (in the natural ordering of the edges induced by the ordering of the vertices)
share exactly $\ell$ vertices. The case of $\ell=k-1$ is referred to as the \emph{tight} cycle.
Note that  $C_n^{(k)}(\ell)$  has precisely $\tfrac{n}{k-\ell}$ edges.  Note also that for $\ell\le
k/2$ the cycle $C_n^{(k)}(\ell)$ is a loose necklace.

\section{The proof of Theorem \ref{main}}\label{markov}
In this section we define a  Markov chain whose states are the matchings of a $k$-uniform
hypergraph $H$ and then prove Theorem \ref{main} by showing that the chain is rapidly mixing to a uniform stationary distribution.

\subsection{The Markov Chain}
Given a $k$-graph $H=(V,E)$, $|V|=n$, let $\Omega(H)$ denote the set of all matchings in $H.$ We define a
Markov  chain $\mathcal{MC}(H)=(X_t)_{t=0}^{\infty}$ with state space $\Omega(H)$ as follows. Set
$X_0=\emptyset$ and for $t\ge 0$, let $X_t$ be a matching $M=\{h_1,h_2,\ldots,h_s\}$ with $h_i\in
H$, $0\le s\le n/k$. Choose an
edge $h\in H$ uniformly at random and consider the set  $S_h:=\{i:\,h\cap h_i\neq
\emptyset,\,i=1,\ldots,s \}$. The following transitions from $X_t$ are allowed in $\mathcal{MC}(H)$:
\begin{itemize}
\item[(-)] if $h\in M$ then $M':=M-h,$

\item[(+)] if $h\notin M$ and $|S_h|= 0$ then
$M':=M+h,$

\item[(+/-)] if $h\notin M$ and $S_h=\{j\}$ then
$M':=M+h-h_j,$

\item[(0)] if $h\notin M$ and $|S_h|\ge2$ then
$M':=M$.

\end{itemize}
Finally, with probability 1/2 set $X_{t+1}:=M'$, else $X_{t+1}:=X_t.$

\begin{fact} The Markov chain $\mathcal{MC}(H)$ is ergodic and symmetric.
\end{fact}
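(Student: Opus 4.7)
The plan is to verify the three properties---irreducibility, aperiodicity, and the symmetry condition $p_{M,M'}=p_{M',M}$---separately, all of which amount to routine book-keeping given the definition of the chain. Aperiodicity is immediate from the lazy step: every state has a self-loop of probability at least $\tfrac12$, so the greatest common divisor of return times equals $1$. For irreducibility I would argue that from any matching $M$ one reaches the empty matching $\emptyset$ by iteratively deleting one edge of $M$ via a type~$(-)$ transition, and that from $\emptyset$ one reaches any target matching $M'$ by iteratively adding the edges of $M'$ via type~$(+)$ transitions; each newly added edge is automatically disjoint from the partial matching built so far, so the precondition $|S_h|=0$ for a type~$(+)$ move holds at every step.

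For symmetry, the key observation is that every non-lazy transition that actually changes the state has probability exactly $\frac{1}{2|E(H)|}$, since the chain selects $h$ uniformly from $E(H)$ and then tosses a fair coin. So it suffices to check that whenever a transition $M\to M'$ with $M\neq M'$ is allowed, the reverse transition $M'\to M$ is also allowed and fires via a move of the appropriate type. I would split this into two cases. Case (i): single-edge moves, where $M$ and $M'$ differ by one edge $h$, so that the forward move is of type $(-)$ or $(+)$ and the reverse is of the opposite type; here one only needs that $h$ is disjoint from all remaining edges of the matching, which is automatic. Case (ii): swap moves $M'=M+h-h_j$ of type~$(+/-)$, where I must verify that selecting $h_j$ in state $M'$ again triggers a type~$(+/-)$ move rather than being blocked as type~$(0)$.

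The one small point requiring verification is case (ii): with respect to $M'$ the intersection set $S_{h_j}$ must be a singleton. This holds because $h_j$ was an edge of the matching $M$, so it is disjoint from every other edge $h_i\in M$ with $i\neq j$, and $h_j$ is disjoint from no edge of $M'$ except possibly $h$; while $h_j\cap h\neq\emptyset$ by the assumption that originally enabled the swap. Hence $S_{h_j}=\{h\}$, confirming that the reverse swap is valid. I don't anticipate any genuine obstacle in the proof; in particular, the $3$-comb-free hypothesis on $H$ plays no role at this stage and will enter only later, when bounding the conductance of $\mathcal{MC}(H)$.
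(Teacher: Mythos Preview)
Your proposal is correct and follows essentially the same approach as the paper: aperiodicity from the lazy self-loop, irreducibility by connecting any two matchings through transitions, and symmetry by computing the transition probabilities explicitly. Your treatment is simply more detailed---the paper argues irreducibility in one parenthetical clause and handles symmetry by writing down the transition probability $\P_{M,M'}$ as a single case formula (it is $\tfrac{1}{2|H|}$ whenever $|M\oplus M'|=1$ or $M\oplus M'=\{e,f\}$ with $e\cap f\neq\emptyset$, and zero otherwise), without spelling out the verification that the reverse swap in case~(ii) is again a valid $(+/-)$ move.
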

\proof First note that this chain is irreducible
 (one can get from any matching to any other matching by a sequence of above transitions) and
 aperiodic (due to loops), and so it is ergodic.
 To prove the symmetry of $\mathcal{MC}(H)$, note that for two different matchings $M,M'\in\Omega(H)$,
 the transition probability
\begin{equation}\label{p}
{\P}_{M,M'}=\begin{cases}\frac 1{|H|} &\mbox{ if } |M\oplus M'|=1\\
\frac 1{|H|}&\mbox{ if } M\oplus M'=\{e,f\}, e\cap f\neq \emptyset\\
0&\mbox{ otherwise }.\end{cases}\end{equation}
 Thus,
$\P_{M,M'}=\P_{M',M}$ .\qed

\bigskip

The above fact implies that $\mathcal{MC}(H)$ converges to a stationary distribution that is
uniform over $\Omega(H)$. Moreover, it follows from equation (\ref{p}) that
\begin{equation}\label{pmin}
p_{\min}=\min\{P_{M,M'}:\; \{M,M'\}\in G,\; M\neq
  M'\}=\frac1{|H|}\ge n^{-k}.
  \end{equation}

\subsection{Canonical Paths}

In this section we will define canonical paths, a tool  used for
estimating the mixing time of the Markov chain $\mathcal{MC}(H)$
introduced in the previous subsection. For brevity,  loose
necklaces will be called \emph{cycles}, while open loose necklaces
will be called \emph{paths}. (Note that a pair of edges sharing at
least two  vertices is a path, not a cycle.)

Set  $V(H)=\{1,2,\ldots,n\}$ and $\min S=\min\{i: i\in S\}$ for any $S\subseteq V(H)$.
Let $(I,F)$ be an ordered pair of matchings in $\Omega(H)$ (we might think
of them as the initial and the final matching of the canonical path-to-be).  The symmetric
difference $I\oplus F$  is a hypergraph with $\Delta(I\oplus F)\le2$ and, due to the
assumption that $H\in{\cal H}_0$, also $\Delta(L(I\oplus F))\le2$, that is, in $I\oplus F$ every edge intersects at most two other edges.
 Hence, each component of $I\oplus F$ is a path or a cycle, in which
  the edges of $I$ alternate with the edges $F$. In particular, each cycle-component has an even number of edges.

Let us order the components $Q_1,\dots,Q_q$ of $I\oplus F$ so that $\min V(Q_1)<\cdots<\min V(Q_q)$. We construct the canonical path $\gamma(I,F)=
(M_0,\dots,M_t)$ in the transition graph $G$ by setting $M_0=I$ and then modifying the current
matching by transitions (+), (-), or (+/-), while
 traversing  the components $Q_1,\dots, Q_q$ as follows. For the sake of uniqueness of the canonical path, each component will be traversed from a well defined starting point (an edge $e_1$) and in a well defined direction $e_1,e_2,\dots e_s$. Of, course, for a path there are just two starting points (which determine directions), while for a cycle there are $s$ starting points and two directions from each. The particular rules for choosing the starting point and direction are quite arbitrary and do not really matter for us.  Suppose that we have
already constructed matchings $M_0,M_1,\dots, M_{j}$ and traversed so far the components
$Q_1,\dots,Q_{r-1}$.

If $Q_{r}$ is an even path then we assume that $e_1\in F$ (and so $e_s\in I$) and take
$M_{j+1}=M_j+e_1-e_2$, $M_{j+2}=M_{j+1}+e_3-e_4$,..., $M_{j+s/2}=M_{j+s/2-1}+e_{s-1}-e_s$.
If $Q_{r}$ is an odd path then we assume that $\min(e_1\cap e_2)<\min(e_{s-1}\cap e_s)$. If
 $e_1, e_s\in I$ then take $M_{j+1}=M_j-e_1$,
$M_{j+2}=M_{j+1}+e_2-e_3$, $M_{j+3}=M_{j+2}+e_4-e_5$, ...,  $M_{j+(s+1)/2}=M_{j+(s-1)/2}+e_{s-1}-e_s$. If $e_1,e_s\in F$, we apply the
sequence of transitions $M_{j+1}=M_j+e_1-e_2$, $M_{j+2}=M_{j+1}+e_3-e_4$,...,$M_{j+(s-1)/2}=M_{j+(s-3)/2}+e_{s-2}-e_{s-1}$, and $M_{j+(s+1)/2}=M_{j+(s-1)/2}+e_s$.
Finally, if $Q_{r}=(e_1,\dots,e_s)$ is a cycle then we assume that $\min e_1=\min(V(Q_r)\cap V(I))$ and $\min(e_2\cap e_3)>\min(e_{s-1}\cap e_s)$, and
follow the sequence of transitions $M_{j+1}=M_j-e_1$,
$M_{j+2}=M_{j+1}+e_2-e_3$, $M_{j+3}=M_{j+2}+e_4-e_5$, ...,$M_{j+s/2}=M_{j+s/2-1}+e_{s-2}-e_{s-1}$, and
$M_{j+s/2+1}=M_{j+s/2}+e_s$.

We call the component $Q_r$ of $I\oplus F$ \emph{the venue} of the transition $(M_j,M_{j+1})$ (on the canonical path $\gamma(I,F)$) if $M_j\oplus M_{j+1}\subseteq E(Q_r)$.
Note that the obtained sequence $\gamma(I,F)=(M_0,\dots,M_t)$ is unique and satisfies the following properties:
\begin{enumerate}
\item[(a)]  $M_0=I$ and  $M_t=F$,
\item[(b)]   for every $j=0,\dots,t-1$, the pair $\{M_{j},M_{j+1}\}$ is an edge
of the transition graph $G_{\mathcal{MC}(H)}$,
\item[(c)]  for every $j=0,\dots,t$, we have $I\cap F\subseteq M_j\subseteq I\cup F$,
\item[(d)] for every $j=0,\dots,t$, we have $F\cap \bigcup_{i=1}^{r-1}Q_i\subseteq M_j$ and $I\cap \bigcup_{i=r+1}^qQ_i\subseteq M_j$, where $Q_r$ is the venue of $(M_j,M_{j+1})$.

\end{enumerate}

\subsection{Bounding the cuts}

Fix a transition edge $(M,M')$ in  $G_{\mathcal{MC}(H)}$. Let
$\Pi_{M,M'}=\{(I,F):\,(M,M')\in\gamma(I,F)\}$ be the set of canonical
paths containing the transition edge $(M,M').$ Our goal is to show that
\begin{equation}\label{Pi}
|\Pi_{M,M'}|\leq
|\Omega'(H)|,
\end{equation}  where
\[\Omega'(H)=\{H'\subseteq H:\,\exists e\in H'\mbox{ such that }H'-e\in \Omega(H)\}.\]
Note that
$|\Omega'(H)|\le|\{(M,e)\;:\; M\in \Omega(H),\; e\in H\}|\le  n^k|\Omega(H)|$ and $\log|\Omega(H)|=O(n\log n)$.
Thus, in view of the remarks at the end of Section \ref{intro}, the estimates (\ref{mix}), (\ref{cut}), (\ref{Phi}), (\ref{pmin}), and (\ref{Pi}) yield  a polynomial bound on $t_{mix}(\epsilon)$ and thus complete the proof of Theorem \ref{main}.

We will prove (\ref{Pi}) by defining a function $\eta_{M,M'}:\Pi_{M,M'}\to \Omega'(H)$ and  showing that $\eta_{M,M'}$ is an injection.
Fix $(I,F)\in \Pi_{M,M'}$ and define
\begin{equation}\label{eta}
\eta_{M,M'}(I,F)=(I\oplus F)\oplus(M\cup M').
\end{equation}

\begin{fact} For all $I,F\in\Pi_{M,M'}$ we have $\eta_{M.M'}(I,F)\in \Omega'.$
\end{fact}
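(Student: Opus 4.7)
The plan is to verify that $\eta_{M,M'}(I,F)$ is either a matching or a matching with a single extra intersecting edge, by decomposing along the components of $I \oplus F$. Write $\eta := (I\oplus F)\oplus(M\cup M')$. Because $H\in\mathcal{H}_0$ we have $\Delta(L(I\oplus F))\le 2$, so the components $Q_1,\dots,Q_q$ of $I\oplus F$ are (loose) paths and cycles, and they are pairwise vertex-disjoint and also disjoint from $V(I\cap F)$. Thus once I show that $\eta$ behaves well on each component, the pieces can be glued globally without introducing new vertex conflicts.

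First I handle the off-venue components. From property (c), $I\cap F\subseteq M\cup M'$, while $I\cap F$ is disjoint from $I\oplus F$, so $I\cap F\subseteq\eta$. For a component $Q_i$ with $i<r$, property (d) gives $F\cap Q_i\subseteq M\cap M'$; since $M$ and $M'$ are matchings and every $I$-edge of $Q_i$ shares a vertex with some adjacent $F$-edge already in both $M$ and $M'$, no $I$-edge of $Q_i$ lies in $M\cup M'$. Hence $(M\cup M')\cap Q_i=F\cap Q_i$ and $\eta\cap Q_i=I\cap Q_i$, which is a matching. Symmetrically, for $i>r$, $\eta\cap Q_i=F\cap Q_i$ is a matching.

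Next I treat the venue component $Q_r$. Here $(M,M')=(M_j,M_{j+1})$ for an explicit $j$, and in each of the cases (even path, odd path with $I$-ends, odd path with $F$-ends, cycle) a direct computation from the transition formulas in the definition of $\gamma(I,F)$ identifies $(M\cup M')\cap Q_r$ as an alternating set of edges of $Q_r$. In every path case and in the first and last transitions of the cycle case, the complement $\eta\cap Q_r$ consists of edges whose indices differ by at least $2$ in the (open or cyclic) ordering of $Q_r$; since $Q_r$ is a loose necklace these edges are pairwise vertex-disjoint, so $\eta\cap Q_r$ is a matching. For each ``middle'' transition of a cycle, $\eta\cap Q_r$ additionally contains both $e_1$ and $e_s$; these are consecutive in the cyclic ordering and so share a vertex, but all other pairs in $\eta\cap Q_r$ are at cyclic distance at least $2$ and hence vertex-disjoint. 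Thus $\eta\cap Q_r$ is either a matching, or a matching plus the single edge $e_s$, which meets only $e_1\in\eta\cap Q_r$.

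Combining, since the $Q_i$ are pairwise vertex-disjoint and disjoint from $V(I\cap F)$, the union
$$N=(I\cap F)\;\cup\;\bigcup_{i<r}(I\cap Q_i)\;\cup\;\bigcup_{i>r}(F\cap Q_i)$$
is a matching, and $\eta=N\cup(\eta\cap Q_r)$ inherits the structure of $\eta\cap Q_r$: it is a matching, or a matching with one extra edge that intersects exactly one of its members. In either case some $e\in\eta$ satisfies $\eta-e\in\Omega(H)$, so $\eta\in\Omega'(H)$. The main obstacle is the venue analysis: one must systematically enumerate the subcases of $Q_r$ and check that only the wrap-around pair $\{e_1,e_s\}$ in the cycle case creates a vertex conflict in $\eta\cap Q_r$. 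This rests on the loose necklace property that $e_i\cap e_j=\emptyset$ whenever $i$ and $j$ differ by more than $1$ (cyclically, in the cycle case).
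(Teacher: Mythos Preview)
Your proof is correct and follows essentially the same approach as the paper's: decompose $\eta$ along the components of $I\oplus F$, use properties (c) and (d) to handle the off-venue components, and do a case analysis on the venue component $Q_r$ to see that the only obstruction to $\eta$ being a matching is the wrap-around pair $\{e_1,e_s\}$ in the middle transitions of a cycle. The paper's own argument is terser---it simply asserts the matching property ``by construction of $\gamma(I,F)$'' and then writes out $\eta$ explicitly in the exceptional cycle case---so your version is a more detailed rendition of the same idea; one small caveat is that your off-venue argument (``every $I$-edge of $Q_i$ shares a vertex with some adjacent $F$-edge'') tacitly assumes $Q_i$ has at least two edges, but the single-edge case follows immediately from the construction as well.
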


\proof If $(I,F)\in \Pi_{M,M'}$ then the canonical path $\gamma(I,F)=(M_0,\dots, M_t)$ contains a consecutive pair $M_j=M$ and $M_{j+1}=M'$ for some $j\in\{0,\dots,t\}$. Let $Q_r$ be the component of $I\oplus F$ which is the venue of $(M,M')$ on $\gamma(I,F)$. By the construction of $\gamma(I,F)$ it follows that $\eta_{M,M'}$ is a matching, unless  $Q_r$  is a cycle $(e_1,\dots,e_s)$ and $M'=M+e_\ell-e_{\ell+1}$ for some $\ell\in\{2,4,\dots,s-2\}$. But then, by property (d) above, we  have
$$\eta_{M,M'}=I\cap \bigcup_{i=1}^{r-1}Q_i\cup F \bigcup_{i=r+1}^qQ_i\cup \{e_1,e_3,\dots, e_{\ell-1}, e_{\ell+2},\dots, e_s\}.
$$
Hence, $\eta_{M,M'}-e_1\in\Omega(H)$, and, consequently, $\eta_{M.M'}\in\Omega'(H)$.
 \qed

\begin{fact}\label{injection} The mapping $\eta_{M,M'}:\Pi_{M,M'}\to \Omega'(H)$ is injective.
\end{fact}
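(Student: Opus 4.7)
The plan is to prove Fact~\ref{injection} by constructing an explicit inverse of $\eta_{M,M'}$. I will describe a procedure that, given a value $\eta\in\Omega'(H)$ in the image and the fixed transition edge $(M,M')$, uniquely reconstructs the pair $(I,F)$.

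First, by the involutive property of the symmetric difference, $I\oplus F=\eta\oplus(M\cup M')$, so the hypergraph $I\oplus F$ is recovered immediately from $\eta$ and $(M,M')$. Decompose it into its components $Q_1,\dots,Q_q$ ordered by $\min V(Q_i)$; by the assumption $H\in\mathcal{H}_0$, each $Q_i$ is a loose path or a loose cycle on which the edges of $I$ alternate with those of $F$. The \emph{venue} $Q_r$ is identified as the unique component containing $M\oplus M'$. Having $r$ in hand, property~(d) splits the non-venue components at once: for $i<r$ one has $F\cap Q_i=M\cap Q_i$ and $I\cap Q_i=Q_i\setminus M$, while for $i>r$ the roles are swapped. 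Moreover, $I\cap F=(M\cap M')\setminus(I\oplus F)$, because by property~(c) we have $I\cap F\subseteq M\cap M'\subseteq I\cup F$ and $I\cap F$ is disjoint from $I\oplus F$.

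The remaining and most delicate step is to split the edges of $Q_r$ itself into $I$-edges and $F$-edges. Here I plan to exploit the uniform observation that in every transition of the canonical path, the edge of $M'\setminus M$ (if nonempty) lies in $F$ and the edge of $M\setminus M'$ (if nonempty) lies in $I$; this can be verified case by case against the four sub-constructions (even path, odd path with both endpoints in $I$, odd path with both endpoints in $F$, and cycle). Hence $M\oplus M'$ supplies at least one \emph{anchor} edge on $Q_r$ whose $I/F$ label is known, and since $Q_r$ is a loose path or cycle whose edges strictly alternate between $I$ and $F$, this anchor propagates uniquely to a labeling of every edge of $Q_r$. Assembling the pieces, $I$ and $F$ are obtained as the union of $I\cap F$ with the $I$-edges, respectively $F$-edges, collected across all components, so $(I,F)$ is determined by $\eta$ and $(M,M')$.

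The main obstacle is the anchor-propagation step on the venue. It is subtle because when $Q_r$ is a cycle, $\eta$ need not itself be a matching: as noted in the preceding fact, $\eta$ may contain the pair of intersecting edges flanking the ``gap'' created on $Q_r$, so one must be careful not to confuse $\eta$ with $I\oplus F$. The deterministic rules fixed in the canonical-path construction for choosing the starting edge and direction of the traversal of $Q_r$ make the reconstruction well-defined, and the verification reduces to a straightforward bookkeeping check in each of the sub-cases listed above.
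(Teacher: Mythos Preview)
Your proposal is correct and follows essentially the same strategy as the paper: recover $I\oplus F=\eta\oplus(M\cup M')$, read off $I\cap F$ from $M$ (or $M\cap M'$) and property~(c), use property~(d) to label all non-venue components, and then use a single anchor edge from $M\oplus M'$ together with alternation to label the venue $Q_r$. The paper phrases the anchor step as ``if $M\setminus M'=\{e\}$ then $e\in I$, otherwise the unique edge of $M'\setminus M$ is in $F$,'' which is exactly your observation; your additional remarks about the cycle case and the traversal rules are sound but not strictly needed for the inverse.
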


\proof We will prove  this fact by showing that any value $\eta$ of this function uniquely determines the pair $(I,F)$ for which $\eta_{M,M'}(I,F)=\eta$. Given $\eta_{M,M'}(I,F)$  we can recover $I\oplus F$ by reversing equation (\ref{eta}):
$$I\oplus F=\eta\oplus(M\cup M').$$
By property (c), we immediately have $I\cap F=M\setminus(I\oplus F)$. It remains to distinguish between the edges of $I\oplus F$ which belong to $I$ and to $F$.
First observe that we can  recover the original ordering of the components $Q_1,\dots, Q_q$ of $I\oplus F$ (by computing $\min V(Q_i)$ for all $i$), as well as the venue $Q_r$ of the transition $(M,M')$ on the canonical path $\gamma(I,F)$ (by locating $M\oplus M'$). By property (d), for every $i<r$ we have $Q_i\cap M\subseteq F$, while for every $i>r$ we have $Q_i\cap M\subseteq I$. To reconstruct $I$ and $F$ on $Q_r$, note that it suffices to identify just one edge of $Q_r$ and then follow the alternating pattern of $I$ and $F$ on $Q_r$. To this end, note that $|M\setminus M'|\le1$ and $|M'\setminus M|\le1$ but $|M\oplus M'|\ge1$. If $M\setminus M'=\{e\}$ then $e\in I$. If $M\setminus M'=\emptyset$ then the unique edge which belongs to $M'\setminus M$ is in $F$.
\qed

\section{ Hypergraphs with no 3-combs}\label{examples}

In this section we give a couple of examples of classes of $k$-uniform hypergraphs which belong to
family ${\cal H}_0$.  A
hypergraph is called \emph{simple} (a.k.a. linear)  when no two edges share two vertices, that is,
the maximum pair degree is one. The proofs of the facts stated in this section will be given in the journal version of the paper.

\subsection{Hypergraphs based on hexagonal lattices}\label{kag}

The kagome lattice is the line graph of the hexagonal lattice. We
construct a 3-graph $H$ by replacing each edge $e=uv$ of the
hexagonal lattice $G$  with a triple $uw_ev$, where $w_e$, $e\in
G$, are new and distinct vertices. Then, the intersection graph
$L(H)$  is still the kagome lattice and thus, $H$ contains no
3-comb. Note, however, that $\Delta(L(H))=4$ and, in view of the
results in \cite{lv}, \cite{wei}, our Theorem~\ref{main} is not
new for such $H$.

\subsection{Random hypergraphs}\label{rh}

Consider a random binomial $k$-graph $H=H^{(k)}(n,p)$ where each $k$-tuple of vertices becomes an
edge, independently, with probability $p=p(n)$. To ensure the absence of copies of the 3-comb $H_0$
in $H$ we need $p=o(n^{-k+3/4})$. Indeed, then the expected number of 3-combs is $O(n^{4k-3}p^4)=o(1). $
Consequently, almost all $k$-graphs on $n$ vertices and with $m=o(n^{3/4})$ edges are $H_0$-free.
Such $k$-graphs, however, are very sparse. For instance,  typically  the maximum vertex degree is
three and the maximum pair degree is one, that is, they are simple.

\subsection{3-graphs on triangles of  random graphs}\label{tr}

A \emph{windmill} is a graph consisting of four triangles:  one central triangle and three other,
mutually disjoint triangles, each of which shares one vertex with the central triangle
(see Fig.\ref{windm}).
\begin{figure}
\centering
\subfigure[a windmill]{\label{windm}\includegraphics[scale=0.3]{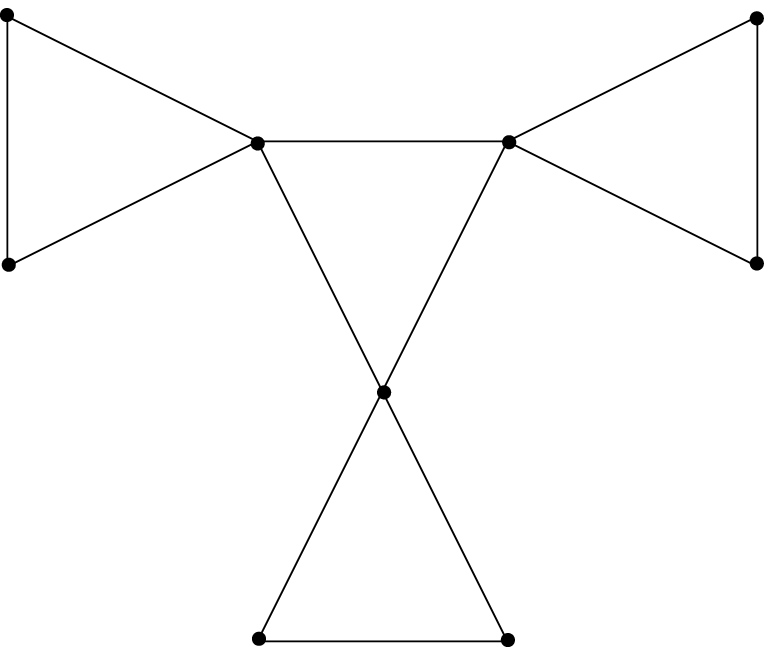}}
\qquad\qquad\qquad
\subfigure[a subdivided 3-graph]{\label{subdiv}\includegraphics[scale=0.2]{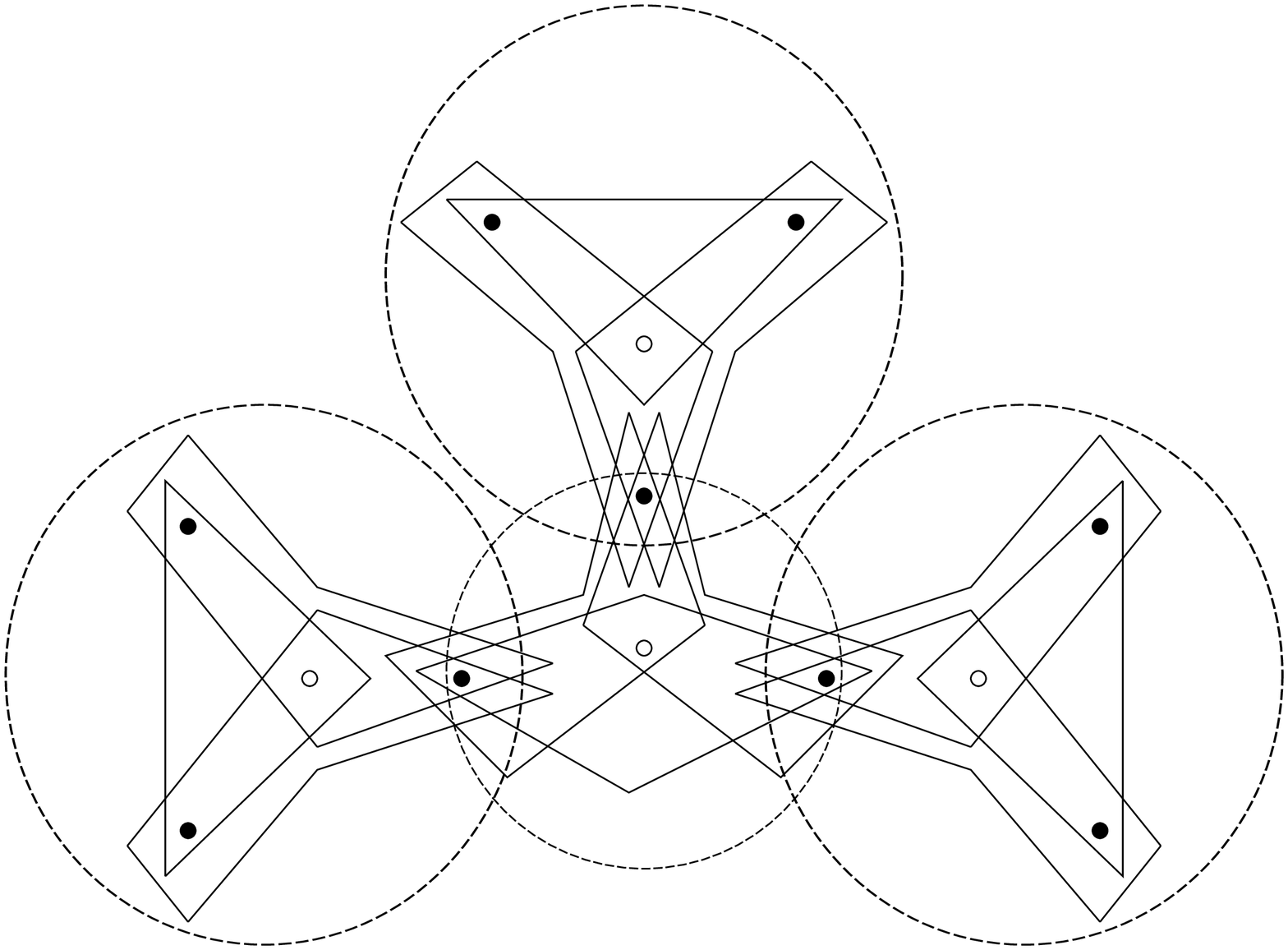}}
\caption{}
\label{fig3}
\end{figure}
 For a graph $G$ define the \emph{triangle 3-graph} $H=T(G)$
 where the hyperedges correspond to the vertex sets of the triangles of $G$.
  Clearly, if $G$ is windmill-free then $H\in{\cal H}_0$.
 The matchings of $H$  correspond
  to  $K_3$-matchings of $G$, i.e. to vertex-disjoint unions of triangles in $G$.

 Note that almost all graphs with $n$ vertices and $m=o(n^{5/4})$ edges are windmill-free. Indeed, a random graph $G(n,p),\,p=o(n^{-3/4}),$ a.a.s. has no windmills because the expected number of windmills is $O(n^9p^{12})=o(1)$.
To have a better insight into the structure of such graphs, observe that for such $p$, a typical
$G(n,p)$ has $o(n^{5/4})$ edges, $o({n^{3/4}})$ triangles and $o(n^4p^5)=o(n^{1/4})$ pairs of
triangles sharing an edge.

\subsection{Subdivided 3-graphs}\label{sub}

 For an  \emph{arbitrary} 3-graph $H=(V,E)$ and a sequence of natural numbers $\nu=(\nu_e: e\in E)$, construct a \emph{subdivided} 3-graph $H'_\nu=(V',E')$ in the following way.
The vertex set is $V'=V\cup V_E$, where $V_E=\bigcup_{e\in H} V_e$, $|V_e|=\nu_e$, and the sets
$V_e$ are pairwise disjoint and disjoint from $V$.
 The edge set $E'$ is obtained by replacing each hyperedge $e=\{v_1,v_2,v_3\}$ with all triples of the form
 $\{v_i,v_j,v\},$ where $1\le i<j\le 3$ and $v\in V_e$.

\begin{fact}
For every $H$ and $\nu$, the hypergraph $H'_\nu$  does not contain any 3-comb, i.e. $H'_\nu\in{\cal
H}_0$.
\end{fact}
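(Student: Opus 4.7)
My plan is to argue by contradiction, exploiting the rigid bipartite-like structure of $H'_\nu$: every edge of $H'_\nu$ has exactly two vertices in the original set $V$ (forming a pair contained in some $e\in E$) and exactly one vertex in $V_E$, and moreover that single $V_E$-vertex lies in exactly one $V_e$, which uniquely determines the ``parent'' edge $e\in E$ to which the whole triple is attached. I would record this observation first, since essentially everything follows from it.

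Next, I would suppose for contradiction that a 3-comb $\{e'_1,e'_2,e'_3,e'_4\}$ sits inside $H'_\nu$, with $e'_1,e'_2,e'_3$ pairwise disjoint and $|e'_4\cap e'_i|\ge 1$ for $i=1,2,3$. Because the $e'_i$ are pairwise disjoint, the three intersections of $e'_4$ with them use three distinct vertices of $e'_4$, and since $|e'_4|=3$ each intersection is a single vertex; in particular every vertex of $e'_4$ lies in exactly one of $e'_1,e'_2,e'_3$.

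The crux of the argument is to look at the unique $V_E$-vertex of $e'_4$. Write $e'_4=\{u,v,w\}$ with $u,v\in V$ and $w\in V_f$ for some $f\in E$; then by the structural observation $\{u,v\}\subseteq f$. By the previous step, $w$ lies in exactly one $e'_i$, say $e'_3$; but any edge of $H'_\nu$ containing $w$ must be attached to $f$ as well, so the two $V$-vertices of $e'_3$ also lie in $f$. Since $e'_3$ and $e'_4$ can share only the vertex $w$ (the triples $e'_3\cap e'_4\subseteq e'_4$ contribute one vertex, which must be $w$ because $u,v\in V$ while the $V$-vertices of $e'_3$ are disjoint from $\{u,v\}$), the four vertices $u,v$ and the two $V$-vertices of $e'_3$ are all distinct and all contained in $f$, contradicting $|f|=3$.

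I do not anticipate a real obstacle here; the only subtlety I want to double-check is the case analysis in the second step, namely that the $V_E$-vertex $w$ of $e'_4$ really must be the shared vertex with its assigned $e'_i$ rather than a ``wasted'' vertex, which is forced by the pigeonhole $|e'_4|=3=\sum_i |e'_4\cap e'_i|$. Once that is pinned down, the contradiction ``four vertices packed into the $3$-edge $f$'' closes the argument cleanly.
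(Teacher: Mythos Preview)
Your proof is correct and rests on the same key observation as the paper's: the $V_E$-vertex $w$ of the central edge $e'_4$ forces any other edge through $w$ to have both its $V$-vertices inside the same parent triple $f\in E$. The paper phrases the contradiction a touch more directly---any edge meeting $e'_4$ at $w$ must share a \emph{second} vertex with $e'_4$, so the three legs cannot be pairwise disjoint---whereas you reach it by counting four distinct $V$-vertices in the $3$-element set $f$; these are two ways of reading off the same incompatibility.
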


\proof Consider an arbitrary edge $f=\{v_i,v_j,v\}$ of $H'_\nu$, where $v\in V_e$ and
$e=\{v_1,v_2,v_3\}$. Every edge which intersects $f$ at $v$ has also another vertex common with
$f$. Thus, $f$ cannot intersect three disjoint edges of $H'_\nu$, and so there is no 3-comb in
$H'_\nu$. \qed

\bigskip

Note that for a simple $H$, every matching $M=\{\{u_1,v_1\},\dots, \{u_t,v_t\}\}$ in the shadow
graph $\Gamma(H)$ of $H$ (obtained by replacing each hyperedge with a graph triangle) determines precisely $\prod_{i=1}^t\nu_{e_i}$ matchings of size $t$ in
$H'_\nu$, where $e_i$ is the unique edge of $H$ which contains the pair $\{u_i,v_i\}$. Moreover,
every matching of $H'_\nu$ is determined this way. Thus,  for simple $H$, the problem of counting
matchings in $H'_\nu$ reduces to counting matchings in graphs with weighted edges.

 In the special case when for all $e\in H$ we have $\nu_e=1$ (see Fig.\ref{subdiv}), the above defined operation generalizes the operation of edge subdivision
for graphs and, as for graphs, it preserves hypergraph planarity. For instance, consider a planar triangulation $G$ and its triangle 3-graph
$H=T(G)$. Clearly, $H$ may contain lots of 3-combs. However, the subdivision $H'_1$ of $H$ is free
of 3-combs, but still represents a triangulation.

\subsection{Enriched tight cycles}\label{enr}

Recall the definition of a $(k-1)$-overlapping $k$-cycle $C_n^{(k)}(k-1)$ (a tight cycle) from
Section \ref{hypcyc} and observe that it is $H_0$-free and has precisely
  $n$ edges.
We will now ``enrich'' this tight cycle by increasing its number of edges by a factor of
$\tfrac{k+1}2$, still keeping it in the family ${\cal H}_0$.

 Fox $k\ge 3$,  let $C=C_n^{(k+1)}(k-1)$ be a $(k-1)$-overlapping $(k+1)$-cycle on
$n$ vertices. We build a $k$-graph $D=D_n^{(k)}$ by replacing each edge $e$ of $C$ by the
$k$-clique $\binom ek$. Note that $D_n^{(k)}$ contains $C_n^{(k)}(k-1)$ and has $\tfrac{k+1}2n$
edges, so, in a sense, it can be viewed as
 an enriched tight cycle.

\begin{fact} For all $k\ge 3$,
the $k$-graph $D_n^{(k)}$ is $H_0$-free.
\end{fact}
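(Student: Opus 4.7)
The plan is to suppose for contradiction that $D = D_n^{(k)}$ contains a 3-comb, i.e., a matching $\{e_1,e_2,e_3\}$ together with a fourth edge $e_4$ that meets each $e_i$. Every edge of $D$ is a $k$-subset of a unique $(k+1)$-edge of the underlying tight cycle $C = C_n^{(k+1)}(k-1)$, so I can write $e_i = a_i \setminus \{x_i\}$ with $a_i \in E(C)$ and $x_i \in a_i$ for $i = 1,2,3,4$. The identity $e_i \cap e_j = (a_i \cap a_j) \setminus \{x_i,x_j\}$ turns the combinatorial assumptions into set conditions: the three disjointness equalities in $\{1,2,3\}$ give $a_i \cap a_j \subseteq \{x_i,x_j\}$, forcing $|a_i \cap a_j| \le 2$; and the three intersection conditions involving $e_4$ give $(a_4 \cap a_i) \not\subseteq \{x_4,x_i\}$, forcing $|a_4 \cap a_i| \ge 1$.

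Next, I translate these bounds into distance constraints inside $C$. Labeling consecutive $(k+1)$-edges of $C$ by integers $m$, the standard intersection formula $|E_p \cap E_q| = \max\{0,\,k+1-2|p-q|\}$ turns the inequalities into $|m_i - m_j| \ge s$ for $i \neq j$ in $\{1,2,3\}$ and $|m_4 - m_i| \le s$, where $s = \lfloor k/2 \rfloor = \lceil (k-1)/2 \rceil$. A short counting argument inside the integer interval $[m_4 - s,\,m_4 + s]$ (of length $2s$) shows that the only way to fit three integers pairwise at distance $\ge s$ is $\{m_1,m_2,m_3\} = \{m_4 - s,\,m_4,\,m_4 + s\}$. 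Relabeling within the matching if necessary, I may assume $m_2 = m_4$, so $a_4 = a_2$ while $a_1,a_3$ are the flanks. Since $2s \ge k-1$, the two flanks also satisfy $a_1 \cap a_3 = \emptyset$.

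The proof finishes with a short parity split. When $k$ is odd, $|a_1 \cap a_2| = |a_2 \cap a_3| = 2$, so the disjointness conditions $e_1 \cap e_2 = e_2 \cap e_3 = \emptyset$ force $\{x_1,x_2\} = a_1 \cap a_2$ and $\{x_2,x_3\} = a_2 \cap a_3$. Then $x_2$ lies in both sets, which are contained in the disjoint sets $a_1$ and $a_3$ respectively, a contradiction. When $k$ is even, $a_1 \cap a_2 = \{v\}$ and $a_2 \cap a_3 = \{w\}$ with $v \neq w$; combining $v \in \{x_1,x_2\}$ and $w \in \{x_2,x_3\}$ with $v \notin \{x_1,x_4\}$ and $w \notin \{x_3,x_4\}$ (forced by $a_4 = a_2$ together with $e_4 \cap e_1 \neq \emptyset \neq e_4 \cap e_3$) yields $x_2 = v$ and $x_2 = w$, contradicting $v \neq w$.

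The main obstacle is the middle step: extracting the rigid symmetric configuration $\{m_4 - s, m_4, m_4 + s\}$ uniformly for both parities of $k$, and checking that cyclic distance in $C$ agrees with linear distance on the relevant window. The latter holds for all reasonable $n$, say $n \ge 4s + 4$; smaller cycles degenerate and can be ruled out by inspection. Once this geometric normalization is in place, each endgame contradiction is a couple of lines of elementary set bookkeeping.
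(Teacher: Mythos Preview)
Your argument is correct. The translation $e_i=a_i\setminus\{x_i\}$ with $a_i$ the \emph{unique} $(k{+}1)$-edge of $C$ containing $e_i$ is justified since distinct edges of $C$ meet in at most $k-1$ vertices; the distance formula and the extraction of the rigid configuration $\{m_4-s,m_4,m_4+s\}$ are fine (and your remark about needing $n$ large enough to linearise the cyclic window is the only caveat). Both parity endgames check out: in the odd case $a_1\cap a_2=\{x_1,x_2\}$ and $a_2\cap a_3=\{x_2,x_3\}$ force $x_2\in a_1\cap a_3=\emptyset$; in the even case you correctly combine $v\in\{x_1,x_2\}$, $v\notin\{x_1,x_4\}$ (from $e_4\cap e_1\neq\emptyset$ and $a_4=a_2$) to get $x_2=v$, and symmetrically $x_2=w\neq v$.

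Your route differs from the paper's. The paper argues by a direct pigeonhole on vertices: since each $e_i$ intersects $e_4\subseteq e\in C$, each $e_i$ lies inside some edge of $C$ meeting $e$; the union of all such edges of $C$ is then claimed to have at most $3k-1$ vertices, too few to accommodate three pairwise disjoint $k$-sets. This is a one-line finish when it works, but as written it is only accurate for odd $k$: for even $k$ the edges of $C$ meeting $e$ cover $3k+1$ vertices (e.g.\ for $k=4$, edges $E_{-2},\dots,E_{2}$ of $C$ all meet $E_0$ and span $13$ vertices), so the count alone does not yield a contradiction and a finer analysis---essentially your even-$k$ endgame---is needed. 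Your position-based argument is longer but handles both parities completely; the paper's argument is slicker in the odd case and would need your extra step (or an equivalent) to close the even case.
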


\proof Suppose that there is a copy of the 3-comb $H_0$ in $D_n^{(k)}$. Let us denote its pairwise
disjoint  edges by $e_1,e_2,e_3$, and the intersecting edge by $e_4$. Suppose that $e_4$ is
contained in an edge $e$ of $C$. There are only two other edges of $C$ which intersect $e$ and the
total number of vertices in the edges of $C$ intersecting $e$ (including $e$ itself) is at most
$3k-1$. The edges $e_1,e_2,e_3$ must each belong to one of these edges of $C$. But this contradicts
the fact that $e_1,e_2,e_3$ are mutually disjoint. \qed

\subsection{Rooted blow-up hypergraphs}\label{root}

The families from subsections \ref{rh} and \ref{tr} consisted of $k$-graphs with only $o(|V|^{3/4})$
edges, while the $H_0$-free $k$-graphs constructed in subsections \ref{kag}, \ref{sub} and \ref{enr} had
$\Theta(|V|)$ edges. Here, we define $H_0$-free $k$-graphs with up to $\Theta(|V|^2)$ edges.

Partition an $N$-vertex set $V$ into $n$ nonempty sets $V_1,\dots,V_n$, and fix one vertex $v_i \in
V_i$ for each $i=1,\dots,n$. Next, for every pair $1\le i<j\le n$ include to the edge set $E$ the
family $E_{ij}$ of all $k$-element subsets of $V_i\cup V_j$ containing both, $v_i$ and $v_j$.
\begin{fact}
The $k$-graph $D=(V,E)$ is $H_0$-free.
\end{fact}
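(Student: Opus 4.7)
My plan is to exploit the fact that every edge of $D$ carries two distinguished ``root'' vertices. By construction each edge $e\in E$ belongs to some family $E_{pq}$ with $1\le p<q\le n$, which forces the two inclusions $e\subseteq V_p\cup V_q$ and $\{v_p,v_q\}\subseteq e$. These two structural constraints are exactly what is needed to rule out a 3-comb.

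Assume toward a contradiction that $D$ contains a 3-comb, consisting of pairwise disjoint edges $e_1,e_2,e_3$ together with a fourth edge $e_4$ meeting each $e_t$. Write $e_4\in E_{pq}$ and $e_t\in E_{i_tj_t}$ for $t=1,2,3$. Since the sets $V_1,\dots,V_n$ partition $V$, any common vertex of $e_4$ and $e_t$ must lie in $(V_p\cup V_q)\cap(V_{i_t}\cup V_{j_t})$; as $e_4\cap e_t\neq\emptyset$, this forces $\{p,q\}\cap\{i_t,j_t\}\neq\emptyset$. Pick $r_t\in\{p,q\}\cap\{i_t,j_t\}$. Because $r_t\in\{i_t,j_t\}$ and $e_t\in E_{i_tj_t}$, we have $v_{r_t}\in e_t$, and $v_{r_t}$ is one of the two root vertices $v_p,v_q$ of $e_4$.

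The proof is then completed by pigeonhole: the three indices $r_1,r_2,r_3$ lie in the two-element set $\{p,q\}$, so two of them coincide, say $r_t=r_{t'}$ with $t\neq t'$. Then $v_{r_t}\in e_t\cap e_{t'}$, contradicting the disjointness of $e_1,e_2,e_3$. There is no real obstacle in this argument; the only thing to verify carefully is the book-keeping of what $e_4\cap e_t\neq\emptyset$ forces once the partition structure and the requirement $\{v_p,v_q\}\subseteq e$ are invoked, after which the pigeonhole step is immediate and gives $D\in{\cal H}_0$.
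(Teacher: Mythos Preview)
Your proof is correct and follows essentially the same argument as the paper: both show that each of $e_1,e_2,e_3$ must contain one of the two root vertices $v_p,v_q$ of $e_4$, and then use pigeonhole to contradict disjointness. Your version is simply more explicit in naming the index pairs $\{i_t,j_t\}$ and the shared index $r_t$, whereas the paper compresses this into the single line ``each of $e_1,e_2,e_3$ intersects $V_i\cup V_j$, and consequently contains either $v_i$ or $v_j$.''
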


\proof Suppose that there is a copy of the 3-comb $H_0$ in $H$. Let us denote its pairwise edges by
$e_1,e_2,e_3$ and the intersecting edge by $e_4$. There exists a unique pair $1\le i<j\le n$ such
that $e_4\in E_{ij}$. Then each of $e_1,e_2,e_3$ intersects the set $V_i\cup V_j$, and consequently
contains either $v_i$ or $v_j$.  But this contradicts the fact that $e_1,e_2,e_3$ are mutually
disjoint. \qed

\bigskip

 Note that when $|V_i|=O(1)$ for all $i$, the hypergraph $D$ has $\Theta(n^2)$ edges. In fact,  when $|V_i|=1$ and $k=2$ we obtain the complete graph $K_n$.

 \section{Further Research}\label{fur}

    It remains an open question how to extend
     our result to larger classes of hypergraphs. The success  in the case of graphs relied mostly on the fact that every graph is free of 3-combs and thus $I\oplus F$ has a very simple structure. This is the case of the hypergraphs in the family ${\cal H}_0$ as well. For general hypergraphs,
     however, the presence of 3-combs may lead to situations where $e_1,e_2,e_3\in I$, $e_4\in F$,  and $e_4\cap e_i\neq\emptyset$, $i=1,2,3$. Then, in the  process of creating the canonical path $\gamma(I,F)$, even if we could somehow traverse the component $Q$ of $I\oplus F$ containing the 3-comb $e_1,e_2,e_3,e_4$, in order to put $e_4$ on the current matching $M_j$ we would need first to delete $e_1$ and $e_2$, and at least one of them, say $e_2$, by a transition of type (-). As $e_2$ might intersect two other (than $e_4$) edges of $F$, this may create a path of length three in the set $\eta_{M,M'}(I,F)$. This scenario can repeat many times and, consequently, $\eta_{M,M'}(I,F)$ may contain many (disjoint) paths of length three. Thus, the image of $\eta_{M,M'}$ would be much larger than $\Omega(H)$, not yielding (\ref{cut}).

     Another direction of further research is to try to obtain an FPRAS for perfect matchings in \emph{dense} $k$-uniform hypergraphs, where the density is measured as, e.g.,  in \cite{KRS}. For $k=2$ this was done in \cite{js}. The 3-combs are an obstacle here too, but in addition, we are facing the problem of the necessity of including into the state space of the Markov chain matchings much smaller than the perfect ones (in \cite{js} the state space consisted only of perfect and near-perfect matchings, that is, matchings missing just two vertices).

\section*{Acknowledgements}

   We thank Martin Dyer, Mark Jerrum and Alex Samorodinsky
    for a number of stimulating discussions.



\end{document}